\DeclareMathOperator{\diag}{diag}
\newtheorem{theorem}{Theorem}
\newtheorem{lemma}[theorem]{Lemma}
\newtheorem{remark}{Remark}
\begin{document}
\title{On the static and small signal analysis of DAB converter}
\author{Yuxin Yang  \quad Hang Zhou \quad Hourong Song \quad Branislav Hredzak
\thanks{This paper is to be submitted to IEEE Transactions on Power Electronics Letter}}

\markboth{Letter IEEE transactions on power electronics}%
{}

\maketitle

\begin{abstract}
This document develops a method to solve the periodic operating point of Dual-Active-Bridge (DAB).
\end{abstract}

\begin{IEEEkeywords}
modal analysis, power electronics, Control theory.
\end{IEEEkeywords}

\section{Introduction}
\IEEEPARstart{T}{he} Dual-Active-Bridge is widely used for DC nano grid. In order to optimize the parameter design, a conversion ratio and a modal analysis are required. The traditional voltage \&second balancing method is not applicable.
Discrete time modeling method has been utilized to model the stability \cite{tong}\cite{lingshi}. However, the complexity of existing discrete time model of DAB converter is too high for practical use. This paper derive a simplified discrete time model without sacrificing the accuracy. Therefore it is more practical for engineering use.
\section{Problem Statement and Objectives}

We consider a piecewise linear time‐invariant system over one period \(T_s\), divided into \(n\) consecutive intervals of durations \(T_1,\dots,T_n\) (with \(T_s=\sum_{i=1}^n T_i\)).  On interval \(i\) the system is governed by
\[
\dot{\mathbf X}(t)
= A_i\,\mathbf X(t) + B_i\,\mathbf U,
\]
where the input \(\mathbf U\) is held constant in each interval.  Denote the state at the start of the \(i\)th interval by \(\mathbf X_{i-1}\), then the discrete‐time update is
\begin{equation}
    \begin{aligned}
        &\mathbf X_i
        = \Phi_i\,\mathbf X_{i-1} \;+\;\Gamma_i,
        \quad
        \Phi_i = e^{A_iT_i},\\
        &\Gamma_i = \int_0^{T_i} e^{A_i(T_i-\tau)}\,B_i\,\mathbf U\,d\tau.
    \end{aligned}
\end{equation}
Our goal is to derive:
\begin{enumerate}
  \item A closed‐form expression for \(\mathbf X_n\) in terms of \(\mathbf X_0\) and the \(\Gamma_i\).
  \item The fixed‐point equation for the periodic steady state \(\mathbf X^*\) satisfying \(\mathbf X_n=\mathbf X_0=\mathbf X^*\).
\end{enumerate}
\section{Product Notation: Direction and Boundary}

We introduce two notations for multiplying the transition matrices:
\[
\prod_{j=a}^{b}\!{}^{\rightarrow}\Phi_j
\;:=\;\Phi_a\,\Phi_{a+1}\,\cdots\,\Phi_b,
\quad
\prod_{j=a}^{b}\!{}^{\leftarrow}\Phi_j
\;:=\;\Phi_b\,\Phi_{b-1}\,\cdots\,\Phi_a,
\]
for \(a\le b\).  In particular,
\[
\prod_{j=a}^{a}\!{}^{\rightarrow}\Phi_j
= \prod_{j=a}^{a}\!{}^{\leftarrow}\Phi_j
= \Phi_a.
\]
In our closed‐form formula only the “reverse” product \(\prod_{j=1}^{n}{}^{\leftarrow}\Phi_j\) appears, avoiding any need for an empty‐product convention.

\section{Main Results: Recursive Closed‐Form and Fixed‐Point Equation}

\subsection{Recursive Closed‐Form}

For any \(n\ge1\), the state at the end of the \(n\)th interval is
\[
\boxed{
\mathbf X_n
= \Bigl(\prod_{j=1}^{n}{}^{\leftarrow}\Phi_j\Bigr)\,\mathbf X_0
+ \sum_{i=1}^{n-1}
  \Bigl(\prod_{j=i+1}^{n}{}^{\leftarrow}\Phi_j\Bigr)\,\Gamma_i
+ \Gamma_n
}
\tag{1}
\]

\subsection{Periodic Fixed‐Point Equation}

If a periodic steady state \(\mathbf X^*=\mathbf X_0=\mathbf X_n\) exists, it satisfies
\[
\boxed{
\bigl(I - \Pi\bigr)\,\mathbf X^*
= \sum_{i=1}^{n-1}
  \Bigl(\prod_{j=i+1}^{\,n}{}^{\leftarrow}\Phi_j\Bigr)\,\Gamma_i
+ \Gamma_n,
}
\quad
\Pi := \prod_{j=1}^{n}{}^{\leftarrow}\Phi_j.
\tag{2}
\]
Take four state transition as an example:
\begin{equation}
\begin{aligned}
X_1 &= \Phi_1 X_0 + \Gamma_1,\\
X_2 &= \Phi_2\Phi_1 X_0 + \Phi_2\,\Gamma_1 + \Gamma_2,\\
X_3 &= \Phi_3\Phi_2\Phi_1 X_0 + \Phi_3\Phi_2\,\Gamma_1 + \Phi_3\,\Gamma_2 + \Gamma_3,\\
X_4 &= \Phi_4\Phi_3\Phi_2\Phi_1 X_0 
      + \Phi_4\Phi_3\Phi_2\,\Gamma_1 
      + \Phi_4\Phi_3\,\Gamma_2 
      + \Phi_4\,\Gamma_3 
      + \Gamma_4.
\end{aligned}
\end{equation}

\section{System Matrices For Dual Active Bridge}
\begin{figure}[H]
  \centering
  \includegraphics[width=0.5\textwidth]{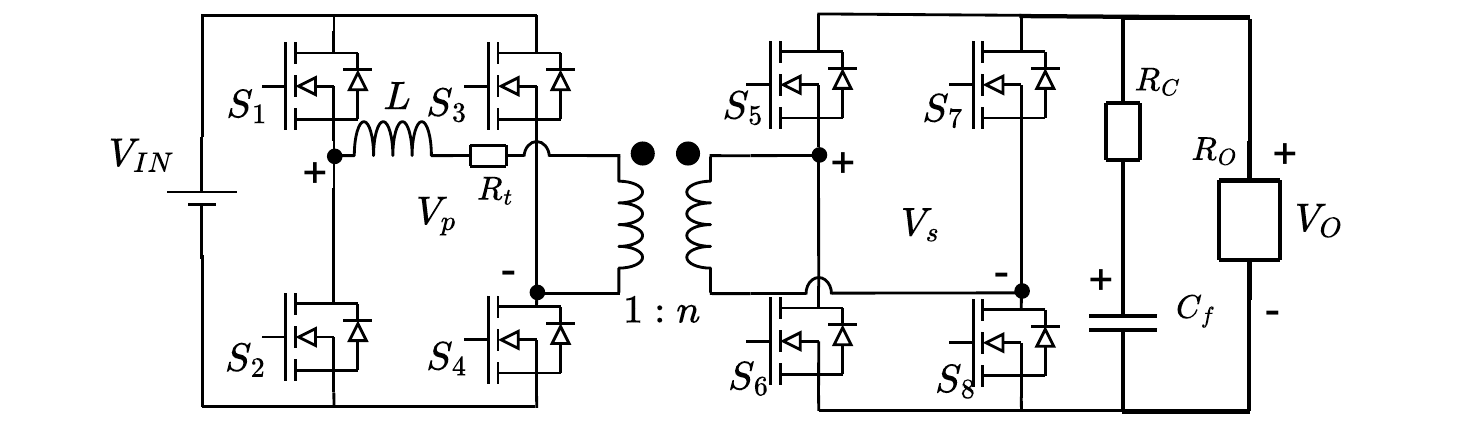}
  \caption{Schematic of the DAB converter}
\end{figure}
\begin{figure}[H]
  \centering
  \includegraphics[width=0.5\textwidth]{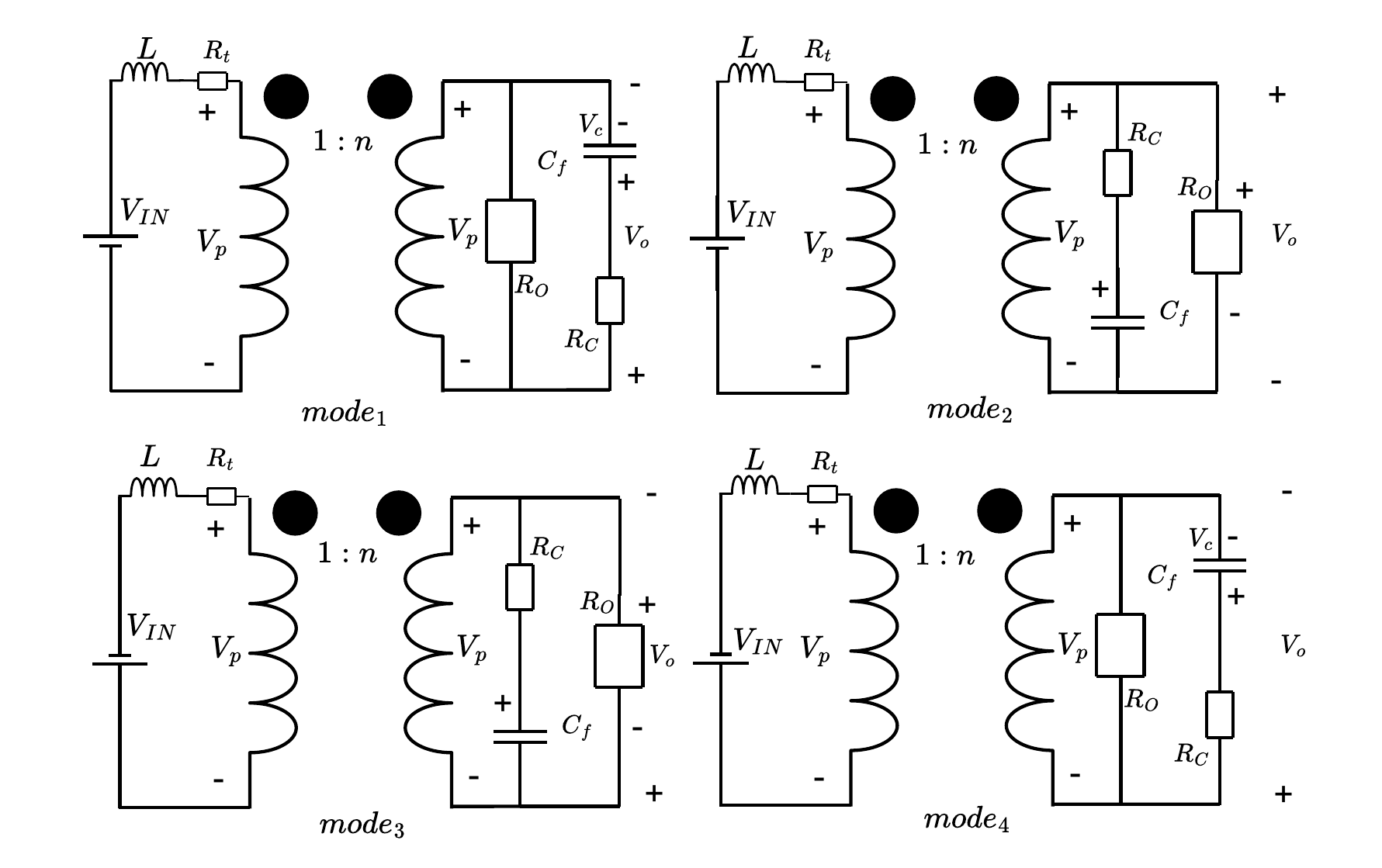}
  \caption{Schematic of the piecewise intervals.}
\end{figure}
\begin{figure*}[!t]
  \centering
  \captionsetup[subfigure]{justification=centering}
  \begin{tabular}{@{}c@{\;\;}c@{}}
    \begin{subfigure}[t]{0.49\textwidth}
      \includegraphics[width=\linewidth]{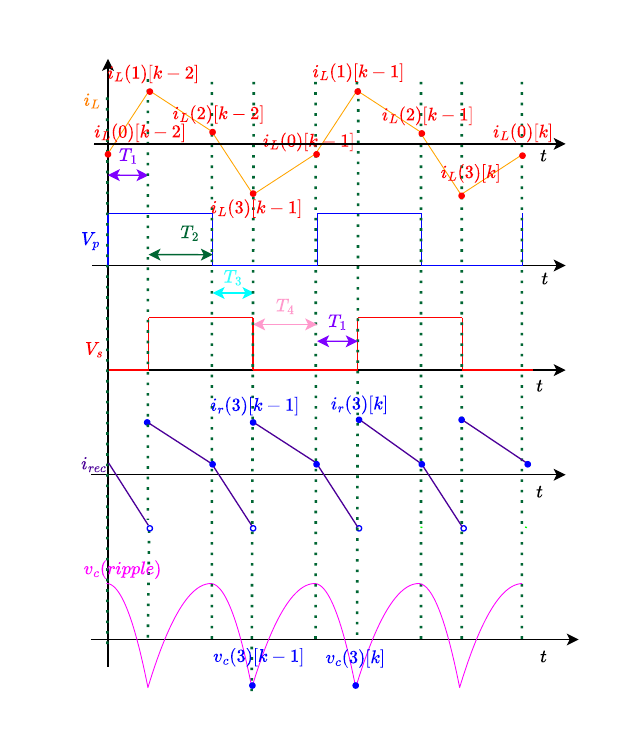}
      \caption{Primary leading-edge}\label{fig:1a}
    \end{subfigure}
    &
    \begin{subfigure}[t]{0.49\textwidth}
      \includegraphics[width=\linewidth]{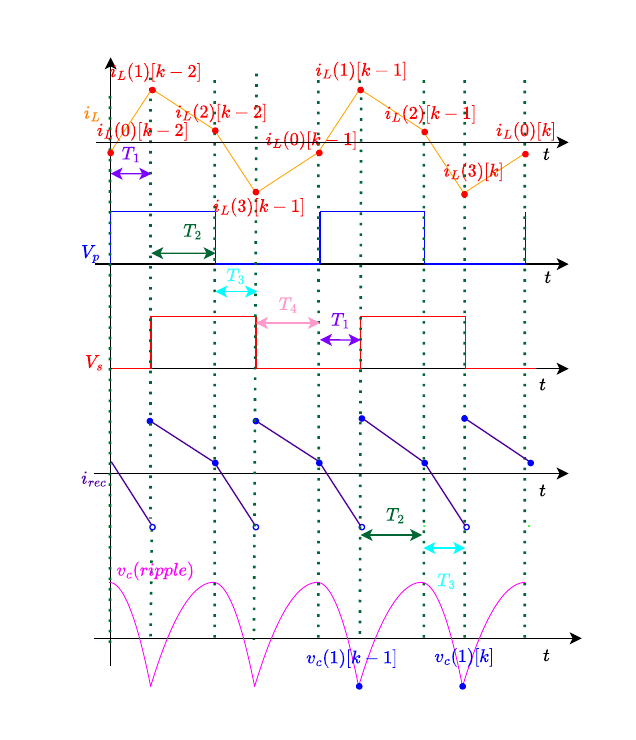}
      \caption{Primary trailing-edge}\label{fig:1b}
    \end{subfigure}
    \\[6pt]
    \begin{subfigure}[t]{0.49\textwidth}
      \includegraphics[width=\linewidth]{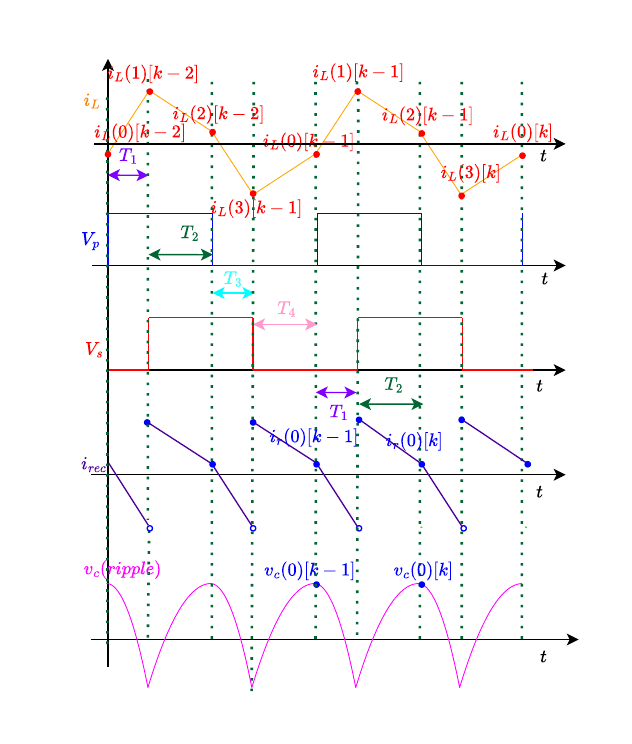}
      \caption{Secondary leading-edge}\label{fig:1c}
    \end{subfigure}
    &
    \begin{subfigure}[t]{0.49\textwidth}
      \includegraphics[width=\linewidth]{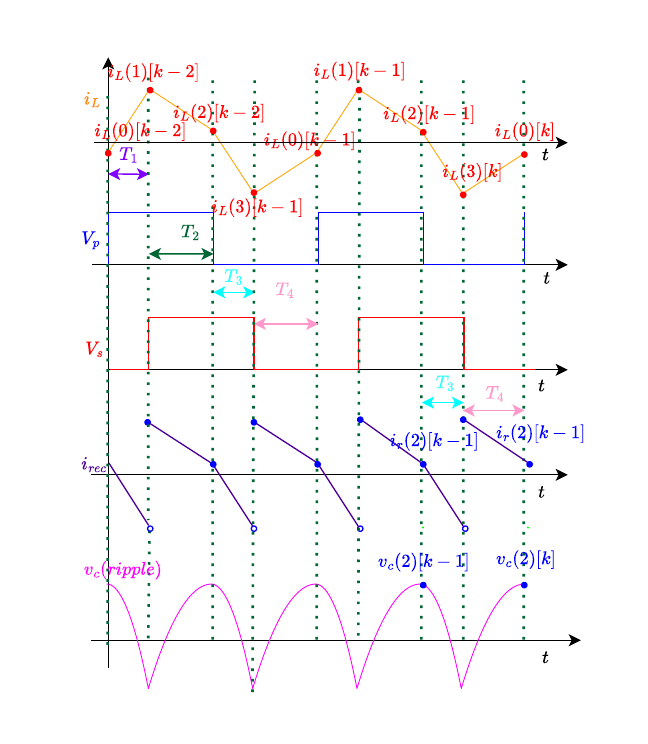}
      \caption{Secondary trailing-edge}\label{fig:1d}
    \end{subfigure}
  \end{tabular}
  \caption{Four modulation edges in DAB control.}
  \label{fig:four_panel}
\end{figure*}

\begin{enumerate}
    \item Subinterval $T_{1}$: $S_1$, $S_4$, $S_6$, $S_7$ are ON; $S_2$, $S_3$, $S_5$, $S_8$ are OFF($A_1$,$B_1$,$C_1$).
    \item Subinterval $T_{2}$: $S_1$, $S_4$, $S_5$, $S_8$ are ON; $S_2$, $S_3$, $S_6$, $S_7$ are OFF($A_2$,$B_2$,$C_2$).
    \item Subinterval $T_{3}$: $S_2$, $S_3$, $S_5$, $S_8$ are ON; $S_1$, $S_4$, $S_6$, $S_7$ are OFF.
    \item Subinterval $T_{4}$: $S_2$, $S_3$, $S_6$, $S_7$ are ON; $S_1$, $S_4$, $S_5$, $S_8$ are OFF.
\end{enumerate}

\begin{equation}
    X =
    \begin{bmatrix}
        i_L\\[4pt]v_{C_O}
    \end{bmatrix}
    \qquad
    Y =
    \begin{bmatrix}
        v_O
    \end{bmatrix}
    \qquad
    U =
    \begin{bmatrix}
        \overline{V_{IN}}
    \end{bmatrix}
\end{equation}
The output variable is : 
\[\mathbf{y}=
\begin{bmatrix}
I_{rec} \\
V_{out}
\end{bmatrix}\]

All matrices refer to the four‐interval model (\(n=4\)):

\fbox{\(A_1 = A_4\)}
\[
A_1 = A_4 =
\begin{bmatrix}
-\dfrac{n^2R_t + \dfrac{R_oR_C}{R_o+R_C}}{n^2L}
& \dfrac{R_o}{nL(R_o+R_C)} \\[8pt]
-\dfrac{R_o}{nC_o(R_o+R_C)}
& -\dfrac{1}{C_o(R_o+R_C)}
\end{bmatrix}
\]

\vspace{1em}

\fbox{\(A_2 = A_3\)}
\[
A_2 = A_3 =
\begin{bmatrix}
-\dfrac{n^2R_t + \dfrac{R_oR_C}{R_o+R_C}}{n^2L}
& -\dfrac{R_o}{nL(R_o+R_C)} \\[8pt]
\;\dfrac{R_o}{nC_o(R_o+R_C)}
& -\dfrac{1}{C_o(R_o+R_C)}
\end{bmatrix}
\]

\vspace{1em}

\fbox{\(B_1 = B_2\)}, \quad \fbox{\(B_3 = B_4\)}
\[
B_1 = B_2 = \begin{bmatrix} \dfrac{1}{L}\\[4pt]0 \end{bmatrix},
\qquad
B_3 = B_4 = \begin{bmatrix} -\dfrac{1}{L}\\[4pt]0 \end{bmatrix}.
\]

\[
\begin{aligned}
C_1 &= \begin{bmatrix}
-\,\dfrac{1}{n}\,
& 0\\
-\,\dfrac{1}{n}\,\bigl(R_C \parallel R_o\bigr)
& \dfrac{R_o}{R_C + R_o}
\end{bmatrix},\\[6pt]
C_2 &= \begin{bmatrix}
\,\dfrac{1}{n}\,
& 0\\\; \dfrac{1}{n}\,\bigl(R_C \parallel R_o\bigr)
& \dfrac{R_o}{R_C + R_o}
\end{bmatrix},\\[4pt]
C_3 &= C_2,\\
C_4 &= C_1,
\end{aligned}
\]
\section{Notation}
\[
S=\mathrm{diag}(1,-1),\qquad 
D'=\mathrm{diag}(-1,1),\qquad
SD'=-I .
\]
For each subinterval $i=1,\dots,4$ let
\[
\Phi_i=e^{A_iT_i},\qquad 
\Gamma_i=A_i^{-1}\bigl(\Phi_i-I\bigr)B_iU .
\]

\vspace{-0.4em}
\section{Similarity / Sign Identities}

\[
\begin{aligned}
A_2&=A_3=SA_1S,\quad &&A_4=A_1,\\
B_1&=B_2,\quad &&B_3=B_4=-B_1,\\
T_1&=T_3,\quad &&T_2=T_4 .
\end{aligned}
\]
\begin{lemma}[Similarity and Sign Relations]
Let
\[
\Phi_i = e^{A_i T_i},\quad
\Gamma_i = A_i^{-1}(\Phi_i - I)\,B_i\,U,
\quad
S = \diag(1,-1),
\]
with the structural and timing assumptions
\[
\begin{aligned}
A_2=A_3 &= S\,A_1\,S,\quad A_4=A_1,\\
B_1=B_2,\;B_3=B_4 &=-B_1,\quad
T_1=T_3,\;T_2=T_4.
\end{aligned}
\]
Then the following hold:
\[
\Phi_3 = S\,\Phi_1\,S,\quad
\Phi_4 = S\,\Phi_2\,S,
\quad
\Gamma_3 = -\,S\,\Gamma_1,\quad
\Gamma_4 = -\,S\,\Gamma_2.
\]
\end{lemma}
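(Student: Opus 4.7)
The backbone of the argument is the fact that $S=\diag(1,-1)$ is an involution ($S^{2}=I$, $S^{-1}=S$), so conjugation by $S$ commutes with analytic functions of a matrix. Concretely, for any matrix $M$ and scalar $t$, $e^{SMS\,t}=S\,e^{Mt}\,S$ because the series $\sum_k \tfrac{(SMS\,t)^k}{k!}=\sum_k S\,\tfrac{(Mt)^k}{k!}\,S$ telescopes the interior $S\,S=I$ pairs. I will take this as the one tool doing all the work.

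\textbf{Transition matrices.} First I would treat $\Phi_3$. Since $A_3=SA_1S$ and $T_3=T_1$ by hypothesis, the involution identity above gives
\[
\Phi_3 \;=\; e^{A_3 T_3} \;=\; e^{S(A_1 T_1)S} \;=\; S\,e^{A_1 T_1}\,S \;=\; S\,\Phi_1\,S .
\]
For $\Phi_4$ I would use $A_4=A_1$ and $T_4=T_2$; combined with $A_1=SA_2S$ (which follows from $A_2=SA_1S$ upon left/right multiplying by $S$), the same computation yields
\[
\Phi_4 \;=\; e^{A_1 T_2} \;=\; e^{S(A_2 T_2)S} \;=\; S\,\Phi_2\,S .
\]

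\textbf{Forced terms.} For the $\Gamma_i$ I would first record two auxiliary facts. The inverse satisfies $A_3^{-1}=(SA_1S)^{-1}=S\,A_1^{-1}\,S$, and by the same telescoping $\Phi_3-I=S\Phi_1 S-S I S=S(\Phi_1-I)S$. I will also need the observation $S B_1 = B_1$, which is immediate from the explicit form $B_1=[\,1/L,\;0\,]^{\top}$ (the only component lives in the $+1$ eigenspace of $S$). Substituting into the definition,
\[
\Gamma_3 \;=\; A_3^{-1}(\Phi_3-I)B_3 U
\;=\; (S A_1^{-1} S)\,S(\Phi_1-I)S\,(-B_1)\,U
\;=\; -S\,A_1^{-1}(\Phi_1-I)\,(SB_1)\,U \;=\; -S\,\Gamma_1 ,
\]
where the middle $S^{2}=I$ and $SB_1=B_1$ both get used. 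The case of $\Gamma_4$ is analogous: $A_4^{-1}=A_1^{-1}=S A_2^{-1} S$, $\Phi_4-I=S(\Phi_2-I)S$, $B_4=-B_1=-B_2$, and again $SB_2=B_2$, so the same collapse gives $\Gamma_4=-S\,\Gamma_2$.

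\textbf{Expected obstacle.} There is nothing deep here; the only step that is not pure algebra is the use of $SB_1=B_1$, which is a property of the specific DAB $B$-matrix rather than a consequence of the stated structural hypotheses. I would therefore flag this explicitly and verify it from the formulas for $B_1,B_2$ given in the preceding section, so that the chain of $S$-cancellations goes through cleanly. Once that is in place, the rest is just bookkeeping of the involution $S^{2}=I$.
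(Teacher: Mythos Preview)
Your proposal is correct and follows essentially the same route as the paper: both exploit the involution $S^2=I$ to push conjugation through the matrix exponential and inverse, and both rely on the extra fact $SB_1=B_1$ (which the paper also invokes, explicitly in the $\Gamma_4$ step) to collapse the forced terms. Your derivation of $\Phi_4$ via $A_1=SA_2S$ is in fact a bit cleaner than the paper's corresponding display, and your explicit flag that $SB_1=B_1$ is a property of the specific DAB $B$-vector rather than of the stated hypotheses is a worthwhile caveat.
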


\begin{proof}
We verify each identity in turn, keeping every factor of $S$ explicit.

\medskip
\noindent\textbf{1. $\Phi_3 = S\,\Phi_1\,S$.}
\begin{equation}
\begin{aligned}
\Phi_3
&= e^{A_3 T_3}
= e^{(S A_1 S)\,T_1}
= S\,e^{A_1 T_1}\,S\\
&= S\,\Phi_1\,S.
\end{aligned}
\end{equation}

\medskip
\noindent\textbf{2. $\Phi_4 = S\,\Phi_2\,S$.}
\begin{equation}
\begin{aligned}
\Phi_4
&= e^{A_4 T_4}
= e^{A_1 T_2}
= S\,e^{A_1 T_2}\,S
= S\,\Phi_2\,S.
\end{aligned}
\end{equation}

\medskip
\noindent\textbf{3. $\Gamma_3 = -\,S\,\Gamma_1$.}
\begin{equation}
\begin{aligned}
\Gamma_3
&= A_3^{-1}(\Phi_3 - I)\,B_3\,U\\
&= (S A_1 S)^{-1}\,(S\Phi_1S - I)\,(-B_1)\,U\\
&= S A_1^{-1} S \;\bigl[S(\Phi_1 - I)S\bigr]\;(-B_1)\,U\\
&= -\,S\bigl[A_1^{-1}(\Phi_1 - I)\,B_1\,U\bigr]
= -\,S\,\Gamma_1.
\end{aligned}
\end{equation}

\medskip
\noindent\textbf{4. $\Gamma_4 = -\,S\,\Gamma_2$.}
\begin{equation}
\begin{aligned}
\Gamma_4
&= A_4^{-1}(\Phi_4 - I)\,B_4\,U\\
&= A_1^{-1}(\Phi_4 - I)(-B_1)\,U\\
&= -\,A_1^{-1}\bigl(S\Phi_2S - I\bigr)\,B_1\,U\\
&= -\,A_1^{-1}S\,(\Phi_2 - I)\,S\,B_1\,U\\
&\quad\text{(but }S B_1 = B_1\text{)}\\
&= -\,S\bigl[A_1^{-1}(\Phi_2 - I)\,B_1\,U\bigr]
= -\,S\,\Gamma_2.
\end{aligned}
\end{equation}

This completes the proof of all four identities.
\end{proof}

\section{$H$–$G$ Operator Proof for the Four-Step Fixed Point}
\begin{theorem}[Half‐Cycle Fixed‐Point Characterization]
Under the symmetry assumptions
\[
\begin{gathered}
S=\diag(1,-1),\quad D'=\diag(-1,1),\\ S\,D'=-I,
T_1=T_3,\;T_2=T_4,
\\A_2=A_3=S\,A_1\,S,\;\\A_4=A_1,\quad
B_3=B_4=-B_1,
\end{gathered}
\]
let
\[
\Phi_i=e^{A_iT_i},\quad
\Gamma_i=\int_0^{T_i}e^{A_i(T_i-\tau)}B_iU\,d\tau,
\]
and define the half‐cycle map
\[
H(X)=\Phi_2\Phi_1\,X+(\Phi_2\Gamma_1+\Gamma_2).
\]
Then the full‐period fixed‐point condition
\begin{equation}
    \begin{aligned}
    &X_4=\Phi_4\Phi_3\Phi_2\Phi_1\,X_0
           +\Phi_4\Phi_3\Phi_2\Gamma_1\\
          & +\Phi_4\Phi_3\Gamma_2
           +\Phi_4\Gamma_3
           +\Gamma_4
           =X_0
    \end{aligned}
\end{equation}
is equivalent to the much shorter “half‐cycle” condition
\[
H(X_0)=D'\,X_0.
\]
In other words, verifying
\[
\Phi_2\Phi_1\,X_0+(\Phi_2\Gamma_1+\Gamma_2)=D'\,X_0
\]
alone guarantees \(X_4=X_0\), thereby reducing the four‐step iteration to a two‐step check.
\end{theorem}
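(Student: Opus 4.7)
The plan is to reduce the four-step fixed-point equation to a single conjugated half-step by invoking the symmetry identities from the preceding lemma. Write $M:=\Phi_2\Phi_1$ and $N:=\Phi_2\Gamma_1+\Gamma_2$, so that $H(X)=MX+N$. Substituting $\Phi_3=S\Phi_1S$, $\Phi_4=S\Phi_2S$, $\Gamma_3=-S\Gamma_1$, $\Gamma_4=-S\Gamma_2$ into the second-half map
\[
H'(X):=\Phi_4\Phi_3\,X+\Phi_4\Gamma_3+\Gamma_4,
\]
and repeatedly using $S^{2}=I$, the expression collapses to $H'(X)=SMS\,X-SN$. Since $SD'=-I$ forces $D'=-S$, this equals $D'\,H(D'X)$; in other words the second half-cycle is simply the first half-cycle conjugated by $D'$.

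With that key identity in hand, the full-period recursion reads $X_4=H'(H(X_0))=D'\,H\bigl(D'\,H(X_0)\bigr)$, so the condition $X_4=X_0$ is equivalent to $H(D'H(X_0))=D'X_0$. The forward (sufficient) implication is then a one-liner: if $H(X_0)=D'X_0$, then $D'H(X_0)=D'^{2}X_0=X_0$, hence $H(D'H(X_0))=H(X_0)=D'X_0$, giving $X_4=X_0$. This already delivers the practical content advertised in the theorem -- a single half-cycle check guarantees the full-period fixed point.

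For the converse I would recast both conditions as linear equations. The half-cycle equation is $(M+S)X_0=-N$, while the full-period equation from formula~(1) simplifies to $(I-\Pi)X_0=\Gamma$ with $\Pi=(SM)^{2}$ and $\Gamma=(SMS-S)N$. Using the factorizations
\[
I-\Pi=(I-SM)(I+SM)=(I-SM)\,S\,(M+S),\qquad \Gamma=-(I-SM)\,SN,
\]
the full-period equation becomes $(I-SM)\bigl[(I+SM)X_0+SN\bigr]=0$; premultiplying the bracketed expression by $S$ recovers exactly $(M+S)X_0=-N$ as soon as $I-SM$ is invertible, which is the generic hypothesis under which the periodic steady state is unique.

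I expect the main obstacle to be the bookkeeping of the $S$/$D'$ conjugations and sign flips -- getting every $S^{2}=I$ cancellation in the right place, distinguishing $SM$ from $MS$ when they occur, and in particular cleanly factoring $I-\Pi$ through $I-SM$ so that the converse direction is genuine rather than a hand-wave. Once those algebraic manipulations are laid out carefully, the theorem reduces to the two linear identities $H'=D'\circ H\circ D'$ and $I-\Pi=(I-SM)\,S\,(M+S)$ verified above.
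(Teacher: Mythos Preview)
Your forward direction is essentially the paper's argument: both reduce the second half-cycle map (your $H'$, the paper's $G$) to an $S$-conjugate of $H$ via the lemma, and then the one-line implication $H(X_0)=D'X_0\Rightarrow G(H(X_0))=X_0$ is identical in substance. Your functional formulation $H'=D'\circ H\circ D'$ is a tidy way to package what the paper writes out as $G(Z)=S[\,\Phi_2\Phi_1(SZ)-\Phi_2\Gamma_1-\Gamma_2\,]$, but the underlying mechanism is the same.

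Where you go beyond the paper is the converse. The paper's proof establishes only the sufficiency direction, despite the theorem being stated as an equivalence; your factorizations $I-\Pi=(I-SM)\,S\,(M+S)$ and $\Gamma=-(I-SM)\,SN$ are correct and do yield the converse, provided $I-SM$ is invertible. That invertibility hypothesis is exactly the condition that the full-period monodromy $\Pi=(SM)^2$ not have $1$ as an eigenvalue arising from an eigenvalue $1$ of $SM$, i.e., that the half-cycle fixed point be unique. It is worth stating this explicitly as an added assumption, since without it the equivalence can genuinely fail (one could have $X_4=X_0$ with $H(X_0)\neq D'X_0$ when $SM$ has eigenvalue $1$). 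So your treatment is both more complete and more honest about the hypothesis needed for the full ``if and only if.''
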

\begin{proof}
Step1: Define Two Half-Cycle Maps
\[
\begin{aligned}
H(X)&=\Phi_2\Phi_1X+\Phi_2\Gamma_1+\Gamma_2,\\
G(Z)&=\Phi_4\Phi_3Z+\Phi_4\Gamma_3+\Gamma_4 .
\end{aligned}
\]

Step2: Two-Step Flip Condition
\[
H(X_0)=D'X_0 .
\tag{H}
\]

Step3: Structure of $G$
\begin{equation}
\label{eq:Gcompact}
\begin{aligned}
G(Z)
&=S\Bigl[\,
      \Phi_2\Phi_1\bigl(SZ\bigr)
      -\Phi_2\Gamma_1-\Gamma_2
      \Bigr].
\end{aligned}
\end{equation}

Step4: Evaluate $G$ at $Z=H(X_0)$

Because $SD'=-I$,
\[
S\,H(X_0)=-X_0 .
\]

Insert into \eqref{eq:Gcompact}:
\begin{equation}
\label{eq:GofH}
\begin{aligned}
G\!\bigl(H(X_0)\bigr)
&=S\Bigl[
     \Phi_2\Phi_1(-X_0)
     -\Phi_2\Gamma_1-\Gamma_2
   \Bigr]\\[4pt]
&=-S\Bigl[
      \Phi_2\Phi_1X_0
      +\Phi_2\Gamma_1
      +\Gamma_2
   \Bigr].
\end{aligned}
\end{equation}

Step5: Apply Two-Step Condition
Using (H),
\[
\Phi_2\Phi_1X_0+\Phi_2\Gamma_1+\Gamma_2=D'X_0 .
\]

Substitute into \eqref{eq:GofH}:
\[
G\!\bigl(H(X_0)\bigr)=-S\,D'X_0=X_0 .
\]

Step6: Four-Step Closure
Since $X_2=H(X_0)$,
\[
X_4
  =G(X_2)
  =G\!\bigl(H(X_0)\bigr)
  =X_0 .
\]
\end{proof}

\section{Small-Signal Model of DAB Under Fixed-Frequency Phase Modulation}
\label{sec:dab_ff_small_signal}

\subsection{Notation, rectified half-cycle state, and SIMO physical output}
\label{subsec:dab_notation_simo}

We consider the four-subinterval DAB model over one switching period $T_s$,
and define the half-cycle length $T_h:=T_s/2$.
The state is
\begin{equation}
x(t)=
\begin{bmatrix}
i_L(t)\\[2pt]
v_C(t)
\end{bmatrix}.
\end{equation}
On subinterval $i\in\{1,2,3,4\}$, the dynamics are
\begin{equation}
\dot x(t)=A_i\,x(t)+B_i\,U,
\end{equation}
where $U$ is constant over each subinterval.

\subsubsection{A. Rectified half-cycle coordinate}
To make the sampled inductor-current sign-consistent across half-cycles,
we use the fixed involution
\begin{equation}
D_r:=\mathrm{diag}(-1,\,1),
\qquad
D_r^{-1}=D_r.
\label{eq:Dr_def_sec}
\end{equation}
All half-cycle maps below are written in this rectified coordinate.

\subsubsection{B. Exact segment maps}
For subinterval $i$ with duration $T_i$, define the exact segment map
\begin{equation}
x^{+}=\Phi_i x^{-}+\Gamma_i,
\Phi_i:=e^{A_iT_i},
\Gamma_i:=\int_0^{T_i}e^{A_i(T_i-\tau)}B_iU\,d\tau .
\label{eq:PhiGamma_def_sec}
\end{equation}

\subsubsection{C. SIMO physical output matrix}
We adopt the SIMO physical output
\begin{equation}
y_k=
\begin{bmatrix}
I_{\mathrm{rec},k}\\[2pt]
V_{\mathrm{out},k}
\end{bmatrix}
=C_{\mathrm{phys}}\,x_k,
\label{eq:y_Cphys_def_sec}
\end{equation}
where $C_{\mathrm{phys}}$ absorbs the turns ratio and the output definition,
so no selection matrix is needed:
\begin{equation}
C_{\mathrm{phys}}:=
\begin{bmatrix}
\frac{1}{n} & 0\\[6pt]
\frac{R_c\parallel R_o}{n} & \frac{R_o}{R_c+R_o}
\end{bmatrix},
\qquad
R_c\parallel R_o:=\frac{R_cR_o}{R_c+R_o}.
\label{eq:Cphys_def_sec}
\end{equation}
Importantly, $C_{\mathrm{phys}}$ is \emph{the same physical mapping}
from $(i_L,v_C)$ to $(I_{\mathrm{rec}},V_{\mathrm{out}})$ on \emph{every}
sampling surface.

\subsection{DAB symmetry identities}
\label{subsec:dab_symmetry}

Let $S:=\mathrm{diag}(1,-1)$ so that
\begin{equation}
D_r=-S.
\label{eq:Dr_minus_S}
\end{equation}
Under the four-interval DAB structure (matched half-cycles), we use
\begin{equation}
\begin{aligned}
&A_4=A_1,\qquad A_2=A_3=S A_1 S,\\
&B_1=B_2,\qquad B_3=B_4=-B_1,\\
&T_1=T_3,\qquad T_2=T_4 .
\end{aligned}
\label{eq:DAB_structural_identities_sec}
\end{equation}
Consequently, the state-transition matrices satisfy
\begin{equation}
\Phi_3=D_r\Phi_1D_r,\qquad \Phi_4=D_r\Phi_2D_r.
\label{eq:Phi_conjugacy_sec}
\end{equation}

\subsection{Two-interval rectified half-cycle map on an arbitrary surface}
\label{subsec:two_interval_template}

Pick any half-cycle sampling surface such that the half-cycle evolution
consists of two consecutive subintervals $a\rightarrow b$ with durations
$T_a$ and $T_b$ satisfying
\begin{equation}
T_a+T_b=T_h.
\label{eq:halfcycle_sum}
\end{equation}

\subsubsection{A. Large-signal half-cycle map}
The rectified half-cycle map is
\begin{equation}
\begin{aligned}
x_{k+1}
&=D_r\Big(
\Phi_b\big(\Phi_a x_k+\Gamma_a\big)+\Gamma_b
\Big)\\
&=: \Phi^{(ab)}x_k+g^{(ab)},
\end{aligned}
\label{eq:halfcycle_map_sec}
\end{equation}
where
\begin{equation}
\Phi^{(ab)}:=D_r\Phi_b\Phi_a,
\qquad
g^{(ab)}:=D_r(\Phi_b\Gamma_a+\Gamma_b).
\label{eq:Phiab_gab_def_sec}
\end{equation}

\subsubsection{B. Endpoint timing sensitivities}
Let $x^\star$ be the fixed point of \eqref{eq:halfcycle_map_sec}, and define
the intermediate steady states
\begin{equation}
\begin{aligned}
x_{a,\mathrm{end}}^\star&=\Phi_a x^\star+\Gamma_a,\\
x_{b,\mathrm{end}}^\star&=\Phi_b x_{a,\mathrm{end}}^\star+\Gamma_b.
\end{aligned}
\label{eq:intermediate_states_sec}
\end{equation}
Using the endpoint vector-field identity, the duration sensitivities are
\begin{equation}
\begin{aligned}
\eta_a^{(ab)}
&:=\left.\frac{\partial x_{k+1}}{\partial T_a}\right|_{\star}
= D_r\,\Phi_b\Big(A_a x_{a,\mathrm{end}}^\star+B_aU\Big),\\[6pt]
\eta_b^{(ab)}
&:=\left.\frac{\partial x_{k+1}}{\partial T_b}\right|_{\star}
= D_r\Big(A_b x_{b,\mathrm{end}}^\star+B_bU\Big).
\end{aligned}
\label{eq:eta_ab_def_sec}
\end{equation}

\subsection{Fixed-frequency modulation: correct two-cycle timing split}
\label{subsec:ff_correct_twocycle}

A fixed-frequency phase modulator anchors one endpoint to the clock and
generates the other endpoint by a comparator against a ramp of amplitude $V_r$.
Define the ramp slope magnitude and its reciprocal as
\begin{equation}
|S_e|=\frac{V_r}{T_h},
\qquad
\kappa:=\frac{1}{|S_e|}=\frac{T_h}{V_r}.
\label{eq:kappa_def_sec}
\end{equation}

\subsubsection{A. Timing law and polarity (primary vs.\ secondary)}
Because the half-cycle partition satisfies $T_a+T_b=T_h$,
the two adjacent timing perturbations are always opposite in sign.
We write the fixed-frequency two-cycle timing law as
\begin{equation}
\widehat T_{a,k}= \rho\,\kappa\,\hat v_{c,k},
\qquad
\widehat T_{b,k}= -\rho\,\kappa\,\hat v_{c,k+1},
\label{eq:timing_law_rho_sec}
\end{equation}
where $\rho\in\{+1,-1\}$ captures the modulator logic:
\begin{equation}
\rho=
\begin{cases}
+1, & \text{secondary-side phase modulation (pos logic)},\\
-1, & \text{primary-side phase modulation (neg logic)}.
\end{cases}
\label{eq:rho_table_sec}
\end{equation}

\subsubsection{B. Linearized half-cycle map}
Linearizing \eqref{eq:halfcycle_map_sec} and substituting
\eqref{eq:timing_law_rho_sec} yield
\begin{equation}
\hat x_{k+1}
=\Phi^{(ab)}\hat x_k
+\beta_-^{(ab)}\hat v_{c,k}
+\beta_+^{(ab)}\hat v_{c,k+1},
\label{eq:lin_map_beta_sec}
\end{equation}
where we absorb $\kappa$ into the two vectors
\begin{equation}
\beta_-^{(ab)}:=\rho\,\kappa\,\eta_a^{(ab)},
\qquad
\beta_+^{(ab)}:=-\rho\,\kappa\,\eta_b^{(ab)}.
\label{eq:beta_def_sec}
\end{equation}

\subsubsection{C. SIMO $z$-domain transfer function}
Taking the $z$ transform of \eqref{eq:lin_map_beta_sec} gives
\begin{equation}
\hat X(z)
=(zI-\Phi^{(ab)})^{-1}\big(\beta_-^{(ab)}+z\beta_+^{(ab)}\big)\hat V_c(z).
\end{equation}
Therefore, the SIMO transfer from $\hat v_c$ to
$y=[I_{\mathrm{rec}},V_{\mathrm{out}}]^T$ is
\begin{equation}
\boxed{
H^{(ab)}_{\mathrm{fix}}(z)
=
C_{\mathrm{phys}}\,(zI-\Phi^{(ab)})^{-1}
\big(\beta_-^{(ab)}+z\beta_+^{(ab)}\big).
}
\label{eq:H_fix_general_sec}
\end{equation}

\subsection{Four sampling surfaces and their transfer functions}
\label{subsec:four_surfaces}

The four-interval DAB admits four natural half-cycle sampling surfaces:
\begin{equation}
\begin{aligned}
&\mathrm{P}^+:(a,b)=(1,2),\quad
\mathrm{S}^+:(a,b)=(2,3),\quad
\\&\mathrm{P}^-:(a,b)=(3,4),\quad
\mathrm{S}^-:(a,b)=(4,1).
\label{eq:four_surfaces_def_sec}
\end{aligned}
\end{equation}
Each surface yields \eqref{eq:H_fix_general_sec}, with the corresponding
$\Phi^{(ab)}$, $\eta_a^{(ab)}$, $\eta_b^{(ab)}$, and polarity $\rho$
given by \eqref{eq:rho_table_sec}.

\subsection{Primary-side and secondary-side equivalence via similarity}
\label{subsec:pri_sec_equivalence}

This subsection formalizes the key fact you validated: even with a single
physical $C_{\mathrm{phys}}$, the primary-side and secondary-side realizations
are related by a similarity map between \emph{surface states}. Hence they
represent the same physical input--output transfer.

\subsubsection{A. A similarity identity for resolvents}
\begin{lemma}
\label{lem:resolvent_similarity}
Let $T$ be nonsingular. For any square matrix $A$,
\begin{equation}
(zI-T^{-1}AT)^{-1}=T^{-1}(zI-A)^{-1}T.
\label{eq:resolvent_similarity_id}
\end{equation}
\end{lemma}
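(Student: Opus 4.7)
The plan is to reduce the identity to a single line of algebra that exploits the fact that $zI$ commutes with every matrix and so is invariant under any similarity. The key observation I would use is the trivial rewriting $zI = T^{-1}(zI)\,T$, which lets the scalar part be pulled through the conjugation by $T$ just as cleanly as the $A$ part.

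First I would combine this with $T^{-1}AT$ inside the parenthesis to factor
\begin{equation}
zI - T^{-1}AT \;=\; T^{-1}(zI)\,T \;-\; T^{-1}AT \;=\; T^{-1}(zI - A)\,T .
\end{equation}
Once this factored form is in hand, the second step is simply to invert both sides using $(XYZ)^{-1} = Z^{-1}Y^{-1}X^{-1}$ with $X = T^{-1}$, $Y = zI - A$, $Z = T$. The outer factors are invertible by hypothesis on $T$, and $(zI - A)^{-1}$ exists precisely when $z$ is not an eigenvalue of $A$; since conjugation preserves the spectrum, $T^{-1}AT$ has the same eigenvalues as $A$, so both inverses appearing in \eqref{eq:resolvent_similarity_id} are defined on the same common resolvent set. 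Performing the inversion yields exactly the claimed identity.

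There is really no substantive obstacle here: the statement is a one-line algebraic manipulation, and the only point worth flagging explicitly is the spectral remark above, namely that \eqref{eq:resolvent_similarity_id} should be read either as an equality of rational matrix-valued functions of $z$, or pointwise on the common resolvent set of $A$ and $T^{-1}AT$, so that both sides are simultaneously meaningful. Since the lemma will be applied immediately afterwards to link the primary- and secondary-side realizations through a surface-state similarity, I would keep the proof this short and let the invariance under $T$ do all the work.
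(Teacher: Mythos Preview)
Your argument is correct and is the standard one-line proof: factor $zI - T^{-1}AT = T^{-1}(zI-A)T$ and invert. The paper itself states the lemma without proof, treating it as routine, so your approach is entirely in line with what the paper expects the reader to supply.
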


\subsubsection{B. Key similarity: $\Phi^{(12)}$ and $\Phi^{(23)}$}
Using \eqref{eq:Phi_conjugacy_sec},
\begin{equation}
\Phi^{(23)}
=D_r\Phi_3\Phi_2
=D_r(D_r\Phi_1D_r)\Phi_2
=\Phi_1D_r\Phi_2,
\label{eq:Phi23_factor_sec}
\end{equation}
and
\begin{equation}
\Phi^{(12)}=D_r\Phi_2\Phi_1.
\label{eq:Phi12_factor_sec}
\end{equation}
Choose the similarity transform
\begin{equation}
T:=\Phi_1.
\label{eq:T_choice_sec}
\end{equation}
Then
\begin{equation}
\begin{aligned}
T^{-1}\Phi^{(23)}T
&=\Phi_1^{-1}\big(\Phi_1D_r\Phi_2\big)\Phi_1\\
&=(\Phi_1^{-1}\Phi_1)D_r\Phi_2\Phi_1\\
&=D_r\Phi_2\Phi_1=\Phi^{(12)}.
\end{aligned}
\label{eq:Phi_similarity_sec}
\end{equation}

\subsubsection{C. Matching the $z$-weighted input vector with correct polarity}
Define the $z$-weighted input vector
\begin{equation}
b^{(ab)}(z):=\beta_-^{(ab)}+z\beta_+^{(ab)}.
\label{eq:bz_def_sec}
\end{equation}
For $\mathrm{P}^+$ we use $\rho=-1$; for $\mathrm{S}^+$ we use $\rho=+1$
as in \eqref{eq:rho_table_sec}.
With this primary/secondary polarity distinction, the corresponding
$z$-weighted vectors satisfy the surface-coordinate relation
\begin{equation}
b^{(12)}(z)=T^{-1}b^{(23)}(z),
\qquad T=\Phi_1.
\label{eq:bz_similarity_sec}
\end{equation}
(If one incorrectly uses the same polarity for both sides,
a global sign mismatch appears.)

\subsubsection{D. Transfer equivalence in a common surface coordinate}
Let $x^{(12)}_k$ denote the surface state on $\mathrm{P}^+$ and
$x^{(23)}_k$ the surface state on $\mathrm{S}^+$.
From the surface relation induced by $T=\Phi_1$, we use
\begin{equation}
x^{(12)}_k=T^{-1}x^{(23)}_k.
\label{eq:surface_state_relation}
\end{equation}
Therefore, the output mapping in the $\mathrm{S}^+$ coordinate becomes
\begin{equation}
y_k=C_{\mathrm{phys}}x^{(12)}_k
=C_{\mathrm{phys}}T^{-1}x^{(23)}_k.
\label{eq:C_transform_surface}
\end{equation}
This is the precise sense in which ``a single $C_{\mathrm{phys}}$'' is
consistent with similarity between surface states: the physical mapping
is the same, but expressing it in a different surface coordinate inserts
the $T^{-1}$ factor.

Now start from the $\mathrm{P}^+$ transfer:
\begin{equation}
H_{\mathrm{P}^+}(z)
=C_{\mathrm{phys}}(zI-\Phi^{(12)})^{-1}b^{(12)}(z).
\label{eq:H_Pplus_def}
\end{equation}
Substitute \eqref{eq:Phi_similarity_sec} and \eqref{eq:bz_similarity_sec}:
\begin{equation}
\begin{aligned}
H_{\mathrm{P}^+}(z)
&=C_{\mathrm{phys}}
\Big(zI-T^{-1}\Phi^{(23)}T\Big)^{-1}T^{-1}b^{(23)}(z)\\
&=C_{\mathrm{phys}}
\Big(T^{-1}(zI-\Phi^{(23)})^{-1}T\Big)T^{-1}b^{(23)}(z)\\
&=\big(C_{\mathrm{phys}}T^{-1}\big)(zI-\Phi^{(23)})^{-1}b^{(23)}(z).
\end{aligned}
\label{eq:H_Pplus_to_Splus_coordinate}
\end{equation}
By \eqref{eq:C_transform_surface}, the right-hand side is exactly the
$\mathrm{S}^+$ transfer expressed in the $\mathrm{S}^+$ surface coordinate.
Hence the two surfaces are physically input--output equivalent.

Applying the same argument to the other pair $(34)\leftrightarrow(41)$
yields the equivalence between $\mathrm{P}^-$ and $\mathrm{S}^-$.

\subsection{Strong same-cycle constraint and why it is accurate at most frequency}
\label{subsec:strong_constraint_lowfreq}

A common simplification replaces $\hat v_{c,k+1}$ by $\hat v_{c,k}$ in
\eqref{eq:lin_map_beta_sec}, yielding a single-cycle input model.

\subsubsection{A. Strong same-cycle model}
Approximating $\hat v_{c,k+1}\approx \hat v_{c,k}$ gives
\begin{equation}
\hat x_{k+1}
=\Phi^{(ab)}\hat x_k+\beta_{\mathrm{sc}}^{(ab)}\hat v_{c,k},
\qquad
\beta_{\mathrm{sc}}^{(ab)}:=\beta_-^{(ab)}+\beta_+^{(ab)}.
\label{eq:strong_constraint_model_sec}
\end{equation}
Thus
\begin{equation}
H_{\mathrm{sc}}^{(ab)}(z)
=
C_{\mathrm{phys}}(zI-\Phi^{(ab)})^{-1}\beta_{\mathrm{sc}}^{(ab)}.
\label{eq:H_sc_def_sec}
\end{equation}

\subsubsection{B. Exact--approximate difference and low-frequency bound}
Subtracting \eqref{eq:H_sc_def_sec} from \eqref{eq:H_fix_general_sec} yields
\begin{equation}
\begin{aligned}
\Delta H^{(ab)}(z)
&:=H_{\mathrm{fix}}^{(ab)}(z)-H_{\mathrm{sc}}^{(ab)}(z)\\
&=C_{\mathrm{phys}}(zI-\Phi^{(ab)})^{-1}(z-1)\beta_+^{(ab)}.
\end{aligned}
\label{eq:DeltaH_def_sec}
\end{equation}
For $z=e^{j\omega T_h}$,
\begin{equation}
|z-1|=2\left|\sin\!\left(\frac{\omega T_h}{2}\right)\right|
\le \omega T_h,
\qquad \omega T_h\ll 1.
\label{eq:zminus1_bound_sec}
\end{equation}
If $\Phi^{(ab)}$ is stable and has no eigenvalue close to $1$,
then $(zI-\Phi^{(ab)})^{-1}$ remains bounded near $z\approx 1$ and
\eqref{eq:DeltaH_def_sec} implies
\begin{equation}
\big\|\Delta H^{(ab)}(e^{j\omega T_h})\big\|
=O(\omega T_h),
\qquad \omega\to 0.
\label{eq:lowfreq_error_order_sec}
\end{equation}

\begin{remark}
Equation \eqref{eq:DeltaH_def_sec} shows the discrepancy is shaped by a
discrete-difference factor $(z-1)$, which explains why the strong same-cycle
model can be nearly indistinguishable from the correct fixed-frequency model
over a wide band, especially when the rectified half-cycle plant has no
marginal pole at $z=1$.
\end{remark}

\bibliographystyle{IEEEtran}
\bibliography{IEEEabrv,biblio}

\end{document}